\documentclass[journal,transmag]{IEEEtran}
\usepackage[utf8]{inputenc}
\usepackage{amsmath}
\usepackage{amssymb}
\usepackage[square,numbers,comma]{natbib}
\usepackage{graphicx}
\usepackage{glossaries-prefix}
\loadglsentries{glossary}
\usepackage[caption=false]{subfig}
\usepackage[algo2e, vlined, ruled, noend]{algorithm2e}
\usepackage{booktabs}
\newcommand{\ra}[1]{\renewcommand{\arraystretch}{#1}}
\usepackage{amsthm}

\usepackage[colorinlistoftodos]{todonotes}

\usepackage{bbm}  

\usepackage{mathtools}

\DeclarePairedDelimiter\floor{\lfloor}{\rfloor}

\newcommand{\matr}[1]{\mathbf{#1}}
\newcommand{\param}[1]{\mathrm{#1}}

\newtheorem{proposition}{Proposition}
\newtheorem{assumption}{Assumption}

\title{Optimization of the Model Predictive Control Meta-Parameters Through Reinforcement Learning}
\author{\IEEEauthorblockN{Eivind Bøhn\IEEEauthorrefmark{1},
Sebastien Gros\IEEEauthorrefmark{2},
Signe Moe\IEEEauthorrefmark{3}, and
Tor Arne Johansen\IEEEauthorrefmark{2,4}}
\IEEEauthorblockA{\IEEEauthorrefmark{1}Artificial Intelligence and Data Analytics, SINTEF Digital, Oslo, Norway}
\IEEEauthorblockA{\IEEEauthorrefmark{2}Department of Engineering Cybernetics, NTNU, Trondheim, Norway}
\IEEEauthorblockA{\IEEEauthorrefmark{3}Sopra Steria Applications, Oslo, Norway}
\IEEEauthorblockA{\IEEEauthorrefmark{4}Centre for Autonomous Marine Operations and Systems, NTNU, Trondheim, Norway}
\thanks{
This work has been submitted to the IEEE for possible publication. Copyright may be transferred without notice, after which this version may no longer be accessible. Funded by the Research Council of Norway through the Centres of Excellence funding scheme, grant number 223254 NTNU AMOS, and through PhD Scholarships at SINTEF, grant number 272402.}}

\begin{document}

\maketitle
\begin{abstract}
    \Gls{mpc} is increasingly being considered for control of fast systems and embedded applications. However, the \gls{mpc} has some significant challenges for such systems. Its high computational complexity results in high power consumption from the control algorithm, which could account for a significant share of the energy resources in battery-powered embedded systems. The MPC parameters must be tuned, which is largely a trial-and-error process that affects the control performance, the robustness and the computational complexity of the controller to a high degree. In this paper we propose a novel framework in which any parameter of the control algorithm can be jointly tuned using \gls{rl}, with the goal of simultaneously optimizing the control performance and the power usage of the control algorithm. We propose the novel idea of optimizing the meta-parameters of \gls{mpc} with \gls{rl}, i.e. parameters affecting the structure of the \gls{mpc} problem as opposed to the solution to a given problem. Our control algorithm is based on an event-triggered \gls{mpc} where we learn when the \gls{mpc} should be re-computed, and a dual mode \gls{mpc} and linear state feedback control law applied in between \gls{mpc} computations. We formulate a novel mixture-distribution policy, and show that with joint optimization we achieve improvements that do not present themselves when optimizing the same parameters in isolation. We demonstrate our framework on the inverted pendulum control task, reducing the total computation time of the control system by 36\% while also improving the control performance by 18.4\% over the best-performing \gls{mpc} baseline. 
\end{abstract}
\glsresetall

\section{Introduction}
\Gls{mpc} is a powerful optimizing control technique, capable of controlling a wide range of systems with high control proficiency while respecting system constraints. Nonlinear \gls{mpc} can even handle nonlinear dynamics and constraints, and while not as simple as its linear counterpart, is increasingly being considered for embedded systems applications with fast dynamics \cite{gros2012aircraft,albin2015nonlinear,Joh17}. However, one of the main drawbacks of the \gls{mpc} is its high computational complexity, which makes it ill-suited for applications with low-powered hardware or battery energy restrictions, necessitating some form of compromise in its implementation \cite{le_approximate_2012, gondhalekar_tackling_2015}.

The high computational complexity of the \gls{mpc} comes as a result of its online operation consisting of solving a numerical \gls{ocp} at every time step, executing the first control input of the computed optimal solution, and then solving the \gls{ocp} again at the subsequent time step. Several adjustments to this basic paradigm has been suggested in order to reduce the computational complexity, e.g. early termination (suboptimal) \gls{mpc} \citep{scokaert_suboptimal_1999}, semi-explicit \gls{mpc} \citep{goebel_simple_2015}, and explicit \gls{mpc} \citep{bemporad_explicit_2002}.

A further challenge of the \gls{mpc} method is the need to tune its parameters to the task at hand, greatly affecting the control proficiency, robustness and computational complexity of the controller. The main tunable parameter in these regards is the prediction horizon,  which essentially controls how far into the future the \gls{mpc} evaluates the optimality of its solution. Further, the performance sensitivity to the prediction horizon varies over the state space, and this observation motivated the adaptive-horizon \gls{mpc} technique, in which the prediction horizon is varied according to some criteria. There exists several suggestions in the literature on how to design these criteria, e.g. terminal conditions \cite{michalska_robust_1993,krener_adaptive_2018}, as decision variables of the \gls{ocp} \cite{scokaert_min_1998}, and learning approaches \cite{gardezi_machine_2018,bohn2021reinforcement}. Other parameters of the \gls{mpc} scheme are also subject to tuning, e.g. discretization step size, objective functions, optimality tolerances and constraints. How to tune all these parameters for the \gls{mpc} scheme is a non-trivial question.

Learning can be an important tool in assisting the tuning process of the \gls{mpc} scheme. In this paper we propose the novel idea of tuning the meta-parameters of the \gls{mpc} scheme using \gls{rl}. By meta-parameters we mean parameters that affect the structure of the \gls{ocp} (including its initial conditions, i.e. when it is computed), rather than parameters that affect the solution to a given \gls{ocp}, the tuning of which has previously been demonstrated in the literature \cite{gros2019data,edwards2021automatic}. This work extends our previous works \cite{bohn2021optimization}, in which we propose to learn when it is necessary to recompute the \gls{mpc} solution, and \cite{bohn2021reinforcement}, in which we suggest to learn the optimal prediction horizon of the \gls{mpc} scheme as a function of the state. Here, we propose a unified framework in which these meta-parameters, as well as any other parameters of the control algorithm, are jointly optimized to simultaneously maximize the control performance and reduce the computational complexity in a configurable manner. The control algorithm we propose consists of a recomputation policy that decides when the \gls{mpc} solution should be computed, a state-feedback controller (i.e. the \gls{lqr}) that is applied on the predicted state trajectory produced by the \gls{mpc} in between \gls{mpc} computations, and \pgls{rl} algorithm that incorporates the parameters of these controllers and optimizes them according to the specified objective. Although we can enforce input constraints, there is no mechanism to ensure that the state constraints hold in our control algorithm, and the constraint satisfaction properties will be determined by the behaviour identified as optimal through \gls{rl}. While we demonstrated in the aforementioned works the effectiveness of learning these meta-parameters in isolation, it is clear that the questions of \emph{when} and \emph{how} (wrt. its tunable parameters) to compute the control algorithm are related, and treating them separately fails to consider the interactions and indirect effects at play.

The contributions of this paper can be summarized as:

\begin{enumerate}
  \item We propose the novel idea of optimizing the meta-parameters of the \gls{mpc} scheme using \gls{rl}.
  \item We develop a novel \gls{mpc} formulation in which the performance and computational power usage are jointly optimized in a configurable, automated manner, which could open up new applications for \gls{mpc}.
  \item To realize the proposed algorithm, we employ novel use of mixture distributions for \gls{rl}.
\end{enumerate}

The rest of the paper is organized as follows. Section \ref{sec:theory} introduces the problem formulation and the necessary theoretical background for our method. Section \ref{sec:method} then describes our method in detail, before we in Section \ref{sec:experiment} demonstrate our method on a numerical example with the inverted pendulum.

\section{Preliminaries and Problem Formulation}\label{sec:theory}
We consider control problems on the form:

\begin{align}
    x_{t+1} &= f(x_t, u_t), \enspace \min_{x, u} \sum_{t=0}^T C_t(x_t, u_t) \label{eq:prob_definition}
\end{align}

where $x$ is the state vector, $u$ is the control input vector, the function $f$ defines the (discrete-time) system dynamics and $C$ defines the cost objective to be minimized. The system runs in an episodic fashion, beginning in some initial state $x_0$ and terminating after some predetermined time $T$ has passed. For any statement that holds regardless of the time, we omit the time subscript. To denote a contiguous sequence of points we use the subscript $x_{t:t+n}$, i.e. the sequence of states from time $t$ to time $t + n$. We denote the time dimension of variables internal to any controller scheme with a subscript $k$. Finally, matrices are denoted with bold uppercase letters, e.g. $\matr{A}$.
\subsection{Control Algorithm}
\subsubsection{Model Predictive Control}\label{sec:theory:mpc}


In this paper we consider adaptive-horizon nonlinear state-feedback discrete-time \gls{mpc} \citep{allgower_nonlinear_1999,rawlings2017model}. The \gls{mpc} receives as arguments the current state of the plant, $\bar{x}_t$, exogenous input variables (e.g. reference signals), $\hat{p}_t$, as well as the prediction horizon, $N_t$, for the \gls{ocp}. We label the \gls{mpc} control law (for a given horizon selected by the horizon policy $\pi_{\theta^\param{N}}^\param{N}$, more on this in Section \ref{sec:method:contpolicies}) as:

\begin{align}
   \left( u^{\param{M}}_{t:t + N_t - 1}, \enspace \hat{x}_{t:t+N_t} \right) = \pi_{\theta^\param{M}}^\param{M}(\bar{x}_t, \hat{p}_t, N_t) 
\end{align}

 where the first return value is the optimal input sequence, the second return value is the predicted optimal state trajectory, and $\theta^\param{M}$ are the tunable parameters of the \gls{mpc} scheme.  The \gls{ocp} is formulated as:

\begin{align}
    \min_{x, u} \quad &\sum_{k=0}^{N_t-1} \rho^k \ell_{\theta^\param{M}}(x_k, u_k, \hat{p}_k) + \rho^{N_t} m_{\theta^\param{M}}(x_{N_t}), \label{eq:mpc:obj}  \\ 
    \textrm{s.t. \quad} &x_0 = \bar{x}_t \label{eq:mpc:init} \\ 
    &x_{k+1} = \hat{f}_{\theta^\param{M}}(x_k, u_k, \hat{p}_k)  \quad \forall \enspace k \in 0, \dots, N_t - 1 \label{eq:mpc:dynamics} \\ 
    &h_{\theta^\param{M}}(x_k, u_k) \leq 0 \qquad \quad \enspace \, \> \forall \enspace k \in 0, \dots, N_t - 1 \label{eq:mpc:h}
\end{align}

Here, $\ell_{\theta^\param{M}}$ is the stage cost consisting of a task-specific objective and the input change term $\Delta u_k^\top \matr{D}_{\theta^\param{M}} \Delta u_k$ where $\matr{D}_{\theta^\param{M}}$ is the input-change weight matrix and $\Delta u_k = u_{k} - u_{k-1}$, $m_{\theta^\param{M}}$ is the terminal cost, $\rho \in (0, 1]$ is the discount factor, $\hat{f}_{\theta^\param{M}}$ is the \gls{mpc} dynamics model, and $h_{\theta^\param{M}}$ is the constraint vector. The stage cost evaluates the computed solution locally up to $N_t-1$ steps, the terminal cost $m_{\theta^\param{M}}(x_{N_t})$ should therefore provide global information about the desirability of the terminal state of the computed trajectory, which helps the \gls{mpc} avoid sub-optimal performance. Therefore, the more accurate the terminal cost is wrt. to the total cost of the infinite horizon solution, the shorter horizons can be used in the \gls{mpc} scheme while still delivering good control performance \citep{lowrey2018plan,zhong_value_2013}. We therefore propose to learn the value function of the \gls{mpc}, which measures the total cost accrued over an infinite horizon, and use it as the terminal cost. For brevity, we refer the reader to \cite{bohn2021reinforcement,lowrey2018plan,zhong_value_2013} for more information on this topic. Note that while we assume state-feedback for simplicity, the formulation could easily be extended with an estimator such as the moving horizon estimator which can be tuned in unison with the \gls{mpc} \cite{esfahani2021reinforcement}. 

We further employ a modification to the \gls{mpc} framework called event-triggered \gls{mpc}, in which the \gls{ocp} is not recomputed at every time step, but rather a recomputation policy $\pi_{\theta^\param{c}}^\param{c}$ decides at every time step whether the \gls{ocp} should be recomputed. Thus, not only the first input of the \gls{mpc} input sequence $u^{\param{M}}_t$ is applied to the plant, but rather a variable number of inputs $u^{\param{M}}_{t:t + n}$, $n < N_t$ are applied sequentially at the corresponding time instance until the recomputation policy triggers the recomputation of the \gls{ocp} at time step $t + n$. We will detail this recomputation policy in Section \ref{sec:method:recomputation}.

\subsubsection{Note on Feasibility}
There is no explicit mechanism in our method that ensures the recursive feasibility of the \gls{mpc} scheme, or the stability of the control algorithm. However, the control algorithm and tuning framework we present is agnostic wrt. to the implementation of the underlying controllers. As such, one could modify the \gls{mpc} scheme by adding e.g. assumptions on the form and magnitude of the disturbances, add terminal constraints, or entirely replace the \gls{mpc} scheme with more complex formulations, e.g. robust \gls{mpc} \cite{bemporad1999robust} or tube \gls{mpc} \cite{mayne2005robust}. 


The prediction horizon is one of the most important tunable parameters to achieve stable control with the \gls{mpc} scheme \citep{grune,mayne2000constrained}. Learning approaches can therefore be an important tool in identifying prediction horizons that yields a stable control system. As discussed in Section \ref{sec:method:rewf}, \gls{rl} is an optimization procedure that seeks optimality  wrt. its reward function, therefore, if it produces non-stabilizing solutions this would suggest that the learning problem itself is ill-posed.

\subsubsection{The Linear Quadratic Regulator}\label{sec:theory:lqrcompat}
The \gls{lqr} \citep{bertsekas1995dynamic} is a state-feedback controller that arises as the optimal solution to unconstrained control problems where the dynamics are linear, and the cost is quadratic. The role of the \gls{lqr} in our control algorithm is to act as the linear feedback correction of the \gls{mpc}, that can be applied to compensate for errors in the open loop predicted state trajectory between \gls{mpc} recomputations. To accomplish this, we employ a time-varying \gls{lqr} where the $\matr{A}_t$ and $\matr{B}_t$ matrices are obtained from the \gls{mpc} scheme as the linearization of the \gls{mpc} model each time it is computed:

\begin{flalign}
    \matr{A}_{t+1:t+N_t}, \matr{B}_{t+1:t+N_t} &= \mathrm{linearize}(\hat{f}_{\theta^\param{M}}) \big\rvert_{x_{t+1:t+N_t}, u_{t:t+N_t-1}} & \\
    \matr{A}, \matr{B} &= \mathrm{linearize}(\hat{f}_{\theta^\param{M}}) \big\rvert_{x^s_t, u^s_t} \label{eq:lqr_dyn_mat:horizonend}
\end{flalign}

After the horizon end (i.e. when $t > i + N_i$ where $i$ is the time of last \gls{mpc} computation), we set the \gls{lqr} dynamics matrices as the time-invariant matrices corresponding to the steady state (equilibrium) of the system $x^s_t, u^s_t$ \eqref{eq:lqr_dyn_mat:horizonend}. The specific linearization procedure depends on the implementation of the dynamics model in the \gls{mpc}, i.e. discrete vs continuous model and the accompanying discretization schemes. The $\matr{Q}$ and $\matr{R}$ cost-weighting matrices are initialized from the \gls{mpc} objective as follows, and then tuned further as described in Section \ref{sec:method}.
 
\begin{align}
    \matr{Q}_\mathrm{init} \leftarrow \frac{\partial^2 \ell_{\theta^\param{M}}(x, u, \hat{p})}{\partial x^2}|_{x^s_0, u^s_0}, \enspace    \matr{R}_\mathrm{init} \leftarrow \frac{\partial^2 \ell_{\theta^\param{M}}(x, u, \hat{p})}{\partial u^2}|_{x^s_0, u^s_0}
\end{align}

In this paper we focus on the discrete-time formulation of the \gls{lqr}. The \gls{lqr} control law \eqref{eq:lqr:control_law_tv} consists of a state-feedback gain matrix $\matr{K}$ that is derived from the dynamics matrices $\matr{A}$ and $\matr{B}$, the cost-weighting matrices $\matr{Q}$, $\matr{R}$ and $\matr{N}$, and the solution $\matr{S}$ to the discrete-time Riccati-equation \eqref{eq:lqr:dare}:

\begin{align}
    \begin{split}
       \matr{S}_{k} &= \matr{Q}_k + \matr{A}_k^\top \matr{S}_{k+1} \matr{A}_k - (\matr{A}_k^\top \matr{S}_{k+1} \matr{B}_k + \matr{R}_k) \\
       & \quad \> (\matr{R}_k + \matr{B}_k^\top \matr{S}_{k+1} \matr{B}_k)^{-1} (\matr{B}_k^\top \matr{S}_{k+1} \matr{A}_k + \matr{R}_k^\top) \label{eq:lqr:dare}
    \end{split} \\ 
    \matr{K}_k &= -(\matr{R}_k + \matr{B}_k^\top \matr{S}_{k+1} \matr{B}_k)^{-1} (\matr{B}_k^\top \matr{S}_{k+1} \matr{A}_k + \matr{R}_k^\top) \label{eq:lqr:K} \\
    u^{\param{L}}_k(x_k) &= \pi^{\param{L}}_{\theta^{\param{L}}}(x_k) = -\matr{K}_k x_k \label{eq:lqr:control_law_tv}
\end{align}

We consider in this paper the vector $\theta^\param{L}$ containing all the elements of the matrices $\matr{Q}, \matr{R}, \matr{N}$ as parameters of the \gls{lqr} controller that can be optimized. The finite horizon \gls{lqr} control problem is stated as:
\begin{align}
    \min_{x, u} \quad &\sum_{k=0}^{N_i - 1} \left[x_{k}^\top \matr{Q} x_{k} + u_k^\top \matr{R} u_k + 2 x_k^\top \matr{N} u_k \right],  \label{eq:lqr:finite_obj} \\
    \textrm{s.t. } \quad &\matr{Q}_k - \matr{N}_k \matr{R}_k^{-1} \matr{N}_k^\top \succ 0, \> \matr{R}_k > 0 \\
    &x_{k+1} = \matr{A}_k x_k + \matr{B}_k u_k \label{eq:lqr:tv_dynamics}
\end{align}

where $N_i$ is the optimization horizon. We also consider the infinite horizon problem, i.e. $N_i=\infty$, in which the matrices are constant wrt. time and the \gls{dare} is solved at its stationary point, yielding the solution $\matr{S}_\infty$ and consequent $\matr{K}_\infty$.

\subsection{Reinforcement Learning}
The system to be optimized in the \gls{rl} framework \cite{sutton_reinforcement_2018} is formulated as a \gls{mdp}. The \gls{mdp} is defined by a state space $s \in \mathcal{S}$, a set of actions $a \in \mathcal{A}$, a transition probability matrix $\mathcal{T}$ that governs the evolution of states as a function of time and actions, i.e. $s_{t+1} = \mathcal{T}(s_t, a_t)$, a reward function $R(s, a)$ that describes the desirability of the states of the system, and finally, the discount factor $\gamma \in [0, 1)$ (note the different limits from $\rho$) that describes the relative importance of immediate and future rewards. Note that rewards, $R$, are interchangeable with costs, $C$, through the substitution $R = -C$ and changing maximization of the objective to minimization. We will use rewards when discussing \gls{rl} as this is customary in \gls{rl}.

The objective in \gls{rl} is to develop a policy $\pi_\theta$, i.e. a function that maps from states to actions (here parameterized by $\theta$) that maximizes the expected discounted sum of rewards. In this paper we use the policy gradient algorithm \gls{ppo} \cite{schulman2017proximal}. For brevity, we refer the reader to the original paper \cite{schulman2017proximal} for details on how \gls{ppo} updates the parameters to be optimized. In general, policy gradient algorithms updates a parameterized stochastic policy directly in the parameter space, by sampling actions from the policy's action distribution and observing the outcomes in terms of states and rewards. Parameters are adjusted to increase the likelihood of actions leading to high rewards using gradient ascent with gradients from the policy gradient theorem \cite{sutton_reinforcement_2018}:

\begin{equation}
\theta \leftarrow \theta_{\mathrm{old}} + \eta \nabla_\theta J(\theta)
\label{eq:rl:gd}
\end{equation}
where
\begin{align}
    &\nabla_\theta J(\theta) = \mathbb{E}\left[\sum_{t=0}^T \nabla_\theta \log \pi_\theta (a_t | s_t) G(t)\right] \label{eq:rl:pgt} \\
    & G(t) = \sum_{t'=t}^{T} \gamma^{t'-t}R(s_{t'}, \pi_\theta(s_{t'})) 
\end{align}

\section{Method}\label{sec:method}
\subsection{A State Space with the Markov Property}\label{sec:method:markov_state}
For \gls{rl} theory to hold for the control system we wish to optimize, the state space needs to have the Markov property, i.e. future states should not depend upon past states given the current state. This section outlines such a Markovian state.

Consider the input sequence $u^{\param{M}}_{i:i + N_i - 1}$ and the predicted state trajectory $\hat{x}_{i + 1:i+N_i}$ computed by the \gls{mpc} at time $i$. As discussed in Section \ref{sec:theory:mpc}, a variable number $n \in \{0, 1, \dots, N_i-1\}$ of these inputs are applied to the plant. Since $n$ is not known a priori, the state vector $x$ is not a sufficient state representation to yield the Markov property for the control system consisting of the recomputation policy, the horizon policy, and the \gls{mpc} and \gls{lqr} control laws. We define an augmented state $s$ in \eqref{eq:state_s} that contains the current plant state and exogenous variables, labeled $\bar{x}_t$ and $\hat{p}_t$, the state of the system, exogenous variables, and prediction horizon used when the last \gls{mpc} computation took place, labeled $\bar{x}_i$, $\hat{p}_i$, and $N_i$, as well as the number of time steps since the \gls{mpc} computation, $t - i$:

\begin{align}
    s_t &= \left[\bar{x}_i, \hat{p}_i, N_i, \bar{x}_t, \hat{p}_t, t - i \right]^\top \label{eq:state_s}
\end{align}

where $\bar{x}_t$ has evolved from $\bar{x}_i$ according to the real system dynamics. When the \gls{mpc} problem is recomputed, the deterministic transition $i \leftarrow t, \bar{x}_i \leftarrow \bar{x}_t, \hat{p}_i \leftarrow \hat{p}_t, N_i \leftarrow N_t$ takes place. The \gls{mpc} and \gls{mpc} plus \gls{lqr} control laws deployed on the plant is then \eqref{eq:cl:mpc} and \eqref{eq:cl:mpclqr}, respectively, while the total control system is defined as \eqref{eq:cl:total}, where the $\mathrm{proj}_{h_{\theta^\param{M}}}$ operator projects the control input onto the constraint vector $h_{\theta^\param{M}}$ \eqref{eq:mpc:h}:

\begin{align}
    \pi_{\theta^\param{M}}^{\param{M}}(s_t) &= \begin{cases}u^{\param{M}}_{t - i}(\bar{x}_i, \hat{p}_i, N_i)&, \mathrm{ if } \enspace t - i < N_i \\ 0 &, \mathrm{ otherwise}\end{cases} \label{eq:cl:mpc} \\
    \pi_{\theta^\param{M,L}}^{\param{ML}}(s_t) &= \begin{cases}u^{\param{M}}_{t - i}(\bar{x}_i, \hat{p}_i, N_i) + u^{\param{L}}_{t}(\hat{x}_{t - i} - \bar{x}_t), \mathrm{ if } \enspace t - i < N_i \\
                               u^{\param{L}}_t(x^s_t - \bar{x}_t)\qquad \qquad \qquad \qquad \enspace,  \mathrm{ if } \enspace t - i \geq N_i
                               \end{cases}
    \label{eq:cl:mpclqr} \\
    \pi_{\theta^\param{M,L}}^{\param{CS}}(s_t) &= \mathrm{proj}_{h_{\theta^\param{M}}} \left(\pi_{\theta^\param{M}}^{\param{M}}(s_t) + \pi_{\theta^\param{M,L}}^{\param{ML}}(s_t)\right) \label{eq:cl:total}
\end{align}

and where $x^s_t$ is the steady state equilibrium of the system at time $t$, which is time-variant in the case of time-varying references.

\begin{assumption}
The time-varying exogenous input variables $\hat{p}_t$ are generated by a Markovian process.
\end{assumption}

\begin{proposition}
The state $s$ has the Markov property, i.e. $P(s_{t + 1} | s_t) = P(s_{t+1} | s_{0:t})$
\end{proposition}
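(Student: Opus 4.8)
The plan is to verify the Markov property by direct decomposition: I would expand the next augmented state $s_{t+1}$ coordinate by coordinate and show that the conditional law of each coordinate given the whole history $s_{0:t}$ already depends on $s_t$ only. Since the control system is closed-loop, "actions" (the recomputation decision of $\pi_{\theta^\param{c}}^{\param{c}}$ and the horizon choice of $\pi_{\theta^\param{N}}^{\param{N}}$) are absorbed into the transition as (possibly stochastic) functions of the current state, so it suffices to argue the closed-loop transition kernel is history-independent. Writing $i'$ for the index of the most recent \gls{mpc} computation as of time $t+1$, we have $s_{t+1} = [\bar{x}_{i'}, \hat{p}_{i'}, N_{i'}, \bar{x}_{t+1}, \hat{p}_{t+1}, (t+1)-i']^\top$, and the argument splits on whether the recomputation policy triggers at $t+1$.

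In the "no recomputation" branch, $i' = i$, so the first three entries of $s_{t+1}$ are exactly the coordinates $\bar{x}_i, \hat{p}_i, N_i$ of $s_t$, and the counter is $(t-i)+1$, again a function of $s_t$. The plant update reads $\bar{x}_{t+1} = f\big(\bar{x}_t, \pi_{\theta^\param{M,L}}^{\param{CS}}(s_t)\big)$; the key point is that the applied input $\pi_{\theta^\param{M,L}}^{\param{CS}}(s_t)$ in \eqref{eq:cl:total} depends on the past only through the \gls{mpc} sequence $u^{\param{M}}_{i:i+N_i-1}$ and the predicted trajectory $\hat{x}_{i+1:i+N_i}$, which are \emph{deterministic} functions of the \gls{ocp} arguments $(\bar{x}_i, \hat{p}_i, N_i)$ cached in $s_t$, together with $\bar{x}_t$ and the \gls{lqr} gain $\matr{K}$, itself a deterministic function of the linearizations along that same predicted trajectory via the Riccati recursion \eqref{eq:lqr:dare}. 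Hence $\bar{x}_{t+1}$ is a deterministic function of $s_t$. The only genuinely stochastic ingredient is $\hat{p}_{t+1}$, whose conditional law depends only on $\hat{p}_t$ by Assumption 1, and $\hat{p}_t$ is a coordinate of $s_t$.

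In the "recomputation" branch, $i' = t+1$, so $\bar{x}_{i'} = \bar{x}_{t+1}$ and $\hat{p}_{i'} = \hat{p}_{t+1}$ (already handled above), $(t+1)-i' = 0$, and $N_{i'} = N_{t+1}$ is sampled from $\pi_{\theta^\param{N}}^{\param{N}}$ evaluated at the current state; because the \gls{ocp} \eqref{eq:mpc:obj}--\eqref{eq:mpc:h} is then solved afresh with initial condition $x_0 = \bar{x}_{t+1}$ and horizon $N_{t+1}$, none of the newly stored quantities reaches back into the history beyond $s_t$ and the freshly drawn action. In both branches the triggering decision is an output of $\pi_{\theta^\param{c}}^{\param{c}}$ applied to a deterministic function of $s_t$, so its randomization is conditionally independent of $s_{0:t-1}$ given $s_t$. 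Combining the pieces by the chain rule gives $P(s_{t+1}\mid s_{0:t}) = P(s_{t+1}\mid s_t)$.

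I expect the main obstacle to be making the event-triggered case split airtight — specifically, the claim that the \gls{mpc}-plus-\gls{lqr} control law at time $t$ is reconstructible from $s_t$ alone, without the trajectory that generated it. This rests on the determinism of the \gls{ocp} solution map and of the linearization and Riccati steps, so I would state this explicitly as a mild standing assumption (a single-valued \gls{mpc} solution map), and remark that for a set-valued solver the Markov property is recovered either by fixing a deterministic selection rule or by additionally carrying the realized pair $(u^{\param{M}}_{i:i+N_i-1}, \hat{x}_{i+1:i+N_i})$ in the state. A secondary point to handle with care is the precise within-step ordering — apply input, step the plant, reveal $\hat{p}_{t+1}$, decide recomputation, pick horizon — so that each sampled quantity is conditioned on the intended $\sigma$-algebra; once this ordering is fixed the factorization is routine.
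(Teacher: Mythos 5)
Your proposal is correct and follows essentially the same route as the paper's own proof: both arguments reduce to the observation that the MPC input sequence, predicted trajectory, and LQR gain are recoverable as deterministic functions of the cached arguments $(\bar{x}_i, \hat{p}_i, N_i)$ stored in $s_t$, so the applied input depends on the history only through $s_t$, after which the Markovianity of the plant dynamics and of $\hat{p}$ (Assumption~1) finishes the job. Your version is somewhat more careful — the explicit case split on the triggering event and the remark that single-valuedness of the OCP solution map is a genuine standing assumption are both points the paper leaves implicit.
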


\begin{proof}
First, note that by definition from \eqref{eq:prob_definition} $x$ is Markovian given $u$:
\begin{align}
    x_{t+1} &= f(x_t, u_t) \\
    \Rightarrow P(x_{t+1} | x_t, u_t) &= P(x_{t+1} | x_{0:t}, u_{0:t})
\end{align}

However, the control law $\pi_{\theta^\param{M,L}}^{\mathrm{CS}}$ \eqref{eq:cl:total} that determines $u_t$ consists of $\pi_{\theta^\param{M}}^{\param{M}}$ \eqref{eq:cl:mpc} --- which depends on the state and exogenous variables $\bar{x}_i, \hat{p}_i$ and the prediction horizon $N_i$ at the last \gls{mpc} computation --- and $\pi_{\theta^{\param{ML}}}^{\param{M,L}}$ \eqref{eq:cl:mpclqr}, which depends on the current state $x_t$ and the $(t-i)^{\mathrm{th}}$ element of the \gls{mpc}'s predicted state trajectory, and $x^s_t$ which depends on $\hat{p}_t$. Therefore, with $s$ as defined in \eqref{eq:state_s} we have:

\begin{align}
    P(u_t | \pi_{\theta^\param{M,L}}^{\mathrm{CS}}, s_t) = P(u_t | \pi_{\theta^\param{M,L}}^{\mathrm{CS}}, s_{0:t})
\end{align}

Finally, note that the \gls{mpc} input sequence $u^{\param{M}}_{i+1:i + N_i - 1}$ and the predicted state trajectory $\hat{x}_{i:i+N_i}$ follows from $\pi^{\param{M}}_{\theta^{\param{M}}}$ and its arguments (which are all known), and as such does not need to be contained in $s$ for $s$ to be Markovian:

\begin{align}
    \Rightarrow P(s_{t + 1} | s_t) = P(s_{t+1} | s_{0:t})
\end{align}


\end{proof}

\begin{figure}[h]
    \centering
    \includegraphics{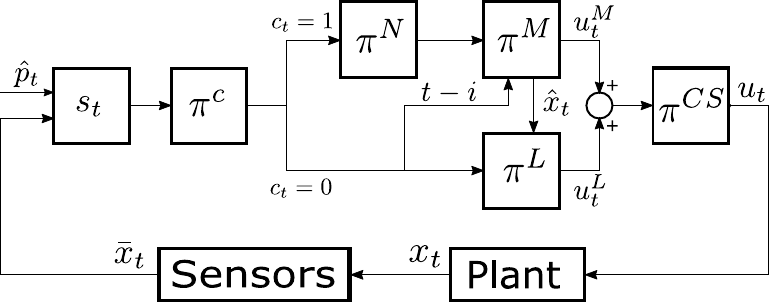}
    \caption{An overview of the control algorithm. Not shown here is the connection of each policy's output to the \gls{rl} algorithm that updates the policy's parameters.}
    \label{fig:control_algorithm}
\end{figure}

\subsection{Policies}\label{sec:method:policies}
We use the notation $\pi_{\theta}(s)$ to label a deterministic function of the state $s$ that is parameterized by $\theta$, and $\pi_{\theta}(\cdot | s)$ to label the stochastic version of the same function. The notation $\tilde{\cdot}$ signifies that the value is drawn from the policy's probability distribution. The implementation of the \gls{ppo} algorithm requires the log probability of the policies, and as such we will list the probability distributions, denoted $P$, and corresponding log probabilities, denoted $\log P$, for the policies we optimize. For brevity, we omit listing input-independent conditional variables as arguments (e.g. covariance) of the distributions.

\subsubsection{The Recomputation Policy}\label{sec:method:recomputation}
The recomputation policy $\pi_{\theta^\param{c}}^\param{c}$ decides on each step whether the \gls{mpc} problem should be recomputed, or if the previously computed solution is still acceptable. In other words, it is a binary variable that chooses among the two different options: recompute or not. As such, we model it as a Bernoulli-distributed random variable, where the policy outputs the logit of the probability that the \gls{ocp} should be recomputed \eqref{eq:bernoulli:logit}, from which we can deduce the probability of not recomputing. We label the output of the of the recomputation policy $c \in \{0, 1\}$. The Bernoulli distribution has the \gls{pmf} \eqref{eq:bernoulli:pmf}.


\begin{align}
    w &= \frac{1}{1 + \exp\left(-\pi_{\theta^\param{c}}^\param{c}(s)\right)} \label{eq:bernoulli:logit} \\
    P^\param{c}(c | s) &= \begin{cases}
            1 - w&, \textrm{ if } c = 0 \\
            w&, \textrm{ if } c = 1
        \end{cases} \label{eq:bernoulli:pmf}
\end{align}

The log probability can be expressed as follows:

\begin{align}
    \log P^\param{c}(c| s) = c \log (w) + (1 - c) \log(1 - w)
\end{align}

Finally, the policy gradient of the recomputation policy is:

\begin{align}
    &\nabla_{\theta^\param{c}} \log \pi^\param{c}_{\theta^\param{c}}(c | s) = \nabla_{\theta^\param{c}}\left(c \log(w) + (1-c) \log(1 - w) \right) \\
    &= -c\nabla_{\theta^\param{c}} \pi_{\theta^\param{c}}^\param{c}(s) w + \nabla_{\theta^\param{c}} \pi_{\theta^\param{c}}^\param{c}(s)(1-c)\left(-1-w\right) \\
    &= \nabla_{\theta^\param{c}} \pi_{\theta^\param{c}}^\param{c}(s)\left(c - 1 - w\right) \label{eq:logreg_grad}
\end{align}

\subsubsection{The Horizon Policy}
The \gls{mpc} prediction horizon $N$ is a positive integer, which we model with the GP-2 variant of the \gls{gpd} \cite{wang1997modeling}. Other models we considered but decided against includes categorical classification models, as they do not consider the ordinal information of the horizon variable, and the standard Poisson distribution, which we found to be too inflexible due to its mean and variance being equal. The horizon model outputs the rate parameter of the \gls{gpd}, $\mu$, as a function of $s$, while the dispersion parameter $\alpha$ is learned as an input-\emph{independent} variable.

\begin{align}
    P^\param{N}(N | s) &= \left(\frac{\mu}{1 + \alpha \mu}\right)^\param{N} \frac{(1 + \alpha N)^{N-1}}{N!}e^{\left(-\frac{\mu (1 + \alpha N)}{1 + \alpha \mu}\right)} \label{eq:gpd:pmf} \\
    \mathbb{E}(N) &= \mu, \enspace \mathbb{V}(N) = \mu (1 + \alpha \mu)^2 \label{eq:gpd:moments} 
\end{align}

In this model, $\alpha$ is restricted according to $1 + \alpha \mu > 0$ and $1 + \alpha N > 0$. To address these constraints, we introduce two hyperparameters $N_{\mathrm{min}}$ and $N_{\mathrm{max}}$ that correspond to the minimum and maximum horizons that the \gls{mpc} should operate with. To constrain the rate parameter $\mu$ we apply the hyperbolic tangent function, denoted $\tanh \in [-1, 1]$, and then linearly scale it to the limits defined by $N_{\mathrm{min}}$ and $N_{\mathrm{max}}$. Finally, the learned $\alpha$ parameter is clipped according to restrictions outlined above:

\begin{align}
    \pi^\param{N}_{\theta^\param{N}}(s) &= \mathrm{scale}\left(\tanh(\mu), N_{\mathrm{min}}, N_{\mathrm{max}}\right) \\
    \alpha_{\mathrm{new}} &= \max\left(\alpha, -\frac{1}{N_{\mathrm{max}}}\right)
\end{align}

The log probability of the \gls{gpd} and the policy gradient of the horizon policy is defined as follows:
\begin{align}
\begin{split}
    &\log P^\param{N}(N | s) = N \log\left(\frac{\mu}{1 + \alpha \mu}\right) \\ 
    &+ (N - 1) \log(1 + \alpha N) - \frac{\mu ( 1 + \alpha N)}{1 + \alpha \mu} - \log(N!)
\end{split} \\
\begin{split}
    &\nabla_{\theta^\param{N}} \log \pi^\param{N}_{\theta^\param{N}}(N | s) = \nabla_{\theta^\param{N}} \pi^\param{N}_{\theta^\param{N}}(s) \Bigg(\frac{N}{\pi^\param{N}_{\theta^\param{N}}(s)} + \\ 
    &\frac{\alpha}{1 + \alpha \pi^\param{N}_{\theta^\param{N}}(s)} +
        \frac{1 + \alpha N + \pi^\param{N}_{\theta^\param{N}}(s)(\alpha(1+N + \alpha N) + 1)}{1+2\alpha \pi^\param{N}_{\theta^\param{N}}(s) + (\alpha \pi^\param{N}_{\theta^\param{N}}(s))^2} \Bigg)
\end{split}
\end{align}

There are several techniques to sample from this distribution \cite{demirtasAccurate2017}. We favor the normal approximation sampling technique for its low run-time complexity: With a sufficiently high rate parameter (i.e. $\mu \gtrapprox 10$), the \gls{gpd} is approximately normally distributed with mean and variance as given in \eqref{eq:gpd:moments}:

\begin{align}
\begin{split}
    \pi_{\theta^\param{N}}^\param{N}(N | s) \approx \mathrm{clip}(&\floor{\mu + \sqrt{\mu ( 1 + \alpha \mu)^2} \zeta + 0.5}, \\
    &N_{\mathrm{min}}, N_{\mathrm{max}}), \enspace \zeta \sim \mathcal{N}(0, 1)
\end{split}
\end{align}

When optimizing the horizon policy in isolation (that is, not together with the rest of the control system) we find that faster convergence and more stable solutions are obtained by learning on an augmented \gls{mdp} where every action selected by the policy is repeated $d > 1$ times. Hence the policy only senses every $d$'th state, and the rewards it receives are the cumulative reward over every step of the control system in the $d$ steps. This technique is known as "frame-skip" in \gls{rl} and is an effective method to enhance learning for problems with discrete actions, see e.g. learning to play atari-games \cite{mnih2015human}, but also for continuous control \cite{kalyanakrishnan2021analysis}. While the exact mechanisms behind the improvements stemming from frame-skipping is not fully understood, it is clear that in certain problems it increases the signal-to-noise ratio of every data sample, which simplifies the credit assignment problem. When using $d > 1$ during exploration, we use $d = 1$ during exploitation (evaluation) as this increases performance. We find that the horizon policy is not sensitive to the exact value of $d$, and all values in $d \in [2, T]$ generally accelerates learning. Finally, note that when optimizing the horizon policy and the recomputation policy together, the latter electing to not recompute the \gls{mpc} will naturally enforce a form of frame-skipping.

\subsubsection{The Controller Policies}\label{sec:method:contpolicies}
To ensure sufficient exploration we formulate a stochastic version of the \gls{mpc}, and the \gls{mpc} plus \gls{lqr} control laws (denoted $\pi_{\theta^{\param{M,L}}}^{\param{ML}}(s) = \pi_{\theta^{\param{M}}}^{\param{M}}(s) + \pi_{\theta^{\param{L}}}^{\param{L}}(s)$), modeling them as Gaussian random variables \eqref{eq:gauss:cl}. The mean is then the output of the control laws, and the covariance of each controller is a learned input-\emph{independent} variable of the \gls{rl} algorithm. To be concise, we use $*$ in place of the superscript for the control laws as in \eqref{eq:gauss:cl}. We will first develop these policies for the \gls{mpc} scheme with a given horizon $N$, where the output is made stochastic as follows:

\begin{align}
    \pi_{\theta^*}^*(u^* | s, N) &= \mathcal{N}\left(\pi_{\theta^*}^*(s, N), \Sigma^*\right), \enspace * \in \{\param{M},\param{ML}\} \label{eq:gauss:cl} \\
    P^\param{u}(u^* | s, N) &= \frac{1}{\Sigma^* \sqrt{2 \pi}} \mathrm{exp}\left(-\frac{1}{2} \left(\frac{u^* - \pi_{\theta^*}^*(s, N) }{\Sigma^*}\right)^2\right)
\end{align}

Note that the argument $N$ to the policies is redundant (and thus these policies are equivalent to those in Section \ref{sec:method:markov_state}), as $N_i$ in $s$ will always reflect the latest $N$ that is decided by the horizon policy ahead of control law calculation. We use this argument here to clarify the derivations. The log probability of the distribution and the policy gradient is:

\begin{align}
\begin{split}
    \log P^\param{u}(u^* | s, N) =& -\frac{1}{2 \Sigma^{*^2}}\left(u^* - \pi_{\theta^*}^*(s, N)\right)^2 - \\
    &\frac{1}{2}\log(\Sigma^{*^2}) - \frac{1}{2} \log(2 \pi)
\end{split} \\
    \nabla_{\theta^*} \log \pi_{\theta^{*}}^*(u^* | s, N) =& \Sigma^{*^{-1}} (\pi^*_{\theta^*}(s, N) - u^*) \nabla_{\theta^*} \pi_{\theta^*}^*(s, N)
\end{align}

Then, note that the adaptive-horizon \gls{mpc} control law is a distribution of the \gls{mpc} schemes over the range of prediction horizons, with weights $P^\param{N}$ assigned by the horizon policy. We first query the horizon policy for which prediction horizon to employ, and then the solution to the \gls{mpc} problem \eqref{eq:mpc:obj} is computed with the selected horizon:

\begin{align}
    \pi_{\theta^{\param{M,N}}}^{\param{MN}}(s) = \pi_{\theta^{\param{M}}}^{\param{M}}\left(x_t, \hat{p}_t, \pi_{\theta^\param{N}}^\param{N}(s)\right)
\end{align}

where the superscript $N$ indicates that this is the adaptive-horizon \gls{mpc} policy. Using the indicator function:

\begin{align}
    \mathbbm{1}_{A} = \begin{cases}1, &\textrm{ if } A \\ 0, &\textrm{ otherwise}\end{cases}
\end{align}

we can formulate the probability distributions and log distributions for the stochastic adaptive-horizon control policies:

\begin{align}
    &P^\param{Nu}(u^{*} | s) = \sum_{N=N_{\mathrm{min}}}^{N_{\mathrm{max}}} P^\param{N}(N | s) P^\param{u}(u^*|s, N), * \in \{\param{MN},\param{MLN}\} \\
    &\log P^\param{Nu}(u^{*} | s) = \log\left(\sum_{N=N_{\mathrm{min}}}^{N_{\mathrm{max}}}P^\param{N}(N|s) P^\param{u}(u^*|s, N)\right) \\
    &= \sum_{N=N_{\mathrm{min}}}^{N_{\mathrm{max}}}\mathbbm{1}_{N=\tilde{N}}\left( \log P^\param{N}(N|s) + \log P^\param{u}(u^*|s, N)\right) \label{eq:cp:logPsum} \\
    &= \log P^\param{N}(\tilde{N}|s) + \log P^\param{u}(u^*|s, \tilde{N})\label{eq:cp:logP}
\end{align}

where the $\log$ operator can be applied inside the summation over the prediction horizons in \eqref{eq:cp:logPsum}, because we know the value $\tilde{N}$ of the horizon variable $N$ that is sampled in advance of the calculation of the control laws \cite{friedman2001elements}.

\subsubsection{The Complete Policy}
We collect all the parameters of the meta-parameter-deciding recomputation and horizon policies described above, and the parameters of the controllers into a single parameter vector $\theta = \left[\theta^\param{c}, \theta^\param{N}, \alpha, \theta^{\param{M}},\theta^{\param{L}}, \Sigma^\param{M}, \Sigma^{\param{ML}}\right]^\top$, and define the complete policy $\pi_\theta$, whose input is the state $s$ and output is the action $a = \left[c, N, u^{\param{M}}, u^{\param{ML}}\right]$

Section \ref{sec:method:recomputation} presents the recomputation problem of the \gls{mpc} and the policy that decides when to compute it. We now view the recomputation policy $\pi_{\theta^\param{c}}^\param{c}$ as selecting the active controller between the two control laws. It is then clear that $\pi_\theta$ is a mixture distribution between the Gaussian control policies where the weights of the mixture are assigned by the Bernoulli recomputation policy $\pi_{\theta^\param{c}}^\param{c}$. We label the probability distribution of the complete policy $\pi_\theta$ as $P^\param{a}$ and define it and the log probability as follows:

\begin{align}
    &P^\param{a}(\tilde{a} | s) = P^\param{c}(0 | s)P^\param{Nu}(\tilde{u}^{\param{MLN}} | s) +  P^\param{c}(1 | s)P^\param{Nu}(\tilde{u}^{\param{MN}} | s) \\
    \begin{split}
        &\log P^\param{a}(\tilde{a} | s) = \mathbbm{1}_{\tilde{c} = 0}\left(\log P^\param{c}(0 | s) + \log P^\param{Nu}(\tilde{u}^{\param{MLN}} | s)\right) + \\ & \qquad \qquad \qquad \, \mathbbm{1}_{\tilde{c} = 1}\left( \log P^\param{c}(1 | s) + \log P^\param{Nu}(\tilde{u}^{\param{MN}} | s)\right)
    \end{split} \label{eq:policy:step1}  \\
    \begin{split}
        &=\mathbbm{1}_{\tilde{c} = 0}\left(\log P^\param{c}(0 | s) + \log P^\param{N}(\tilde{N}_i | s) + \log P^\param{u}(\tilde{u}^{\param{ML}} | s, \tilde{N}_i)\right) \\
        & \quad \> \, + \mathbbm{1}_{\tilde{c} = 1} \left( \log P^\param{c}(1 | s) + \log P^\param{N}(\tilde{N} | s) + \log P^\param{u}(\tilde{u}^{\param{M}} | s, \tilde{N}) \right)
    \end{split} \label{eq:policy:logP}
\end{align}

where we again used the fact that we know the values of the sampled variables $\tilde{c}, \tilde{N}$ to take the logarithm of the sums in \eqref{eq:policy:step1} and  \eqref{eq:policy:logP}, and $\tilde{N}$ represents the output of the horizon policy at the current step $t$ in the case the recomputation policy signals to recompute the \gls{mpc}. While \eqref{eq:policy:logP} is all that is strictly needed to use the \gls{ppo} algorithm, we derive the policy gradient as well to highlight the connection to the \gls{lqr} gradient presented in Section \ref{sec:method:lqropt}, as well as for future applications of this control framework with other \gls{rl} algorithms. 

\begin{align}
    &\nabla_\theta \log \pi_\theta(\tilde{a} | s) = \nabla_\theta \log P^\param{a}(\tilde{a} | s) \\
    \begin{split}
        &= \mathbbm{1}_{\tilde{c} = 0} \Big( \nabla_{\theta^\param{c}} \log \pi_{\theta^\param{c}}^\param{c}(0 | s) + \nabla_{\theta^\param{N}} \log \pi_{\theta^\param{N}}^\param{N}(\tilde{N}_i | s) \\ 
        &\quad \enspace + \nabla_{\theta^{\param{M,L}}} \log \pi_{\theta^{\mathrm{M, L}}}^{\param{ML}}(\tilde{u}^{\param{ML}} | s, \tilde{N}) \Big) + \\
        &\quad \> \, \mathbbm{1}_{\tilde{c} = 1} \Big ( \nabla_{\theta^\param{c}} \log \pi_{\theta^\param{c}}^\param{c}(1 | s) + \nabla_{\theta^\param{N}} \log \pi_{\theta^\param{N}}^\param{N}(\tilde{N} | s) \\ 
        &\quad \enspace + \nabla_{\theta^{\param{M}}}\log \pi_{\theta^{\param{M}}}^{\param{M}}(\tilde{u}^{\param{M}} | s, \tilde{N})  \Big )
    \end{split} \\
    \begin{split}
        &= \mathbbm{1}_{\tilde{c} = 0} \Big( \nabla_{\theta^\param{c}} \log \pi_{\theta^\param{c}}^\param{c}(0 | s) + \nabla_{\theta^\param{N}} \log \pi_{\theta^\param{N}}^\param{N}(\tilde{N}_i | s) +  (\Sigma^{\param{ML}})^{-1} \\
        & \quad \enspace(\pi^{\param{ML}}_{\theta^\param{M,L}}(s, \tilde{N}) - \tilde{u}^\param{ML})(\nabla_{\theta^\param{M}} \pi_{\theta^\param{M}}^\param{M}(s, \tilde{N}_i) + \nabla_{\theta^\param{L}} \pi_{\theta^\param{L}}^\param{L}(s)) \Big) \\
        & \> +  \mathbbm{1}_{\tilde{c} = 1} \Big(
        \nabla_{\theta^\param{c}} \log \pi_{\theta^\param{c}}^\param{c}(1 | s) + \nabla_{\theta^\param{N}} \log \pi_{\theta^\param{N}}^\param{N}(\tilde{N} | s) \\
        &\quad \enspace + (\Sigma^{\param{M}})^{-1}(\pi^{\param{M}}_{\theta^\param{M}}(s, \tilde{N}) - \tilde{u}^\param{M}) \nabla_{\theta^\param{M}} \pi_{\theta^\param{M}}^\param{M}(s, \tilde{N}) \Big)
    \end{split} \label{eq:method:pg}
\end{align}

With this policy one can optimize any parameter of the \gls{mpc} and \gls{lqr} controllers jointly, by providing the gradient of these controllers wrt. the parameters. In the \gls{rl} context, we show how one can obtain these gradients for the \gls{lqr} controller in Section \ref{sec:method:lqropt}. For the gradient of the \gls{mpc} see e.g. \cite{gros2019data,gros_reinforcement_2021}.


\subsection{Optimizing LQR with Reinforcement Learning}\label{sec:method:lqropt}

\subsubsection{The Time-Invariant Case}

To apply the policy-gradient theorem to tune the \gls{lqr}, we need to compute the gradients of the $\matr{K}$-matrix wrt. to the $\matr{Q}$, $\matr{R}$ and $\matr{N}$ matrices. As equations \eqref{eq:lqr:dare} and \eqref{eq:lqr:K} show, the feedback matrix $\matr{K}$ is only implicitly defined in terms of the these matrices, and so direct differentiation of equations \eqref{eq:lqr:dare} and \eqref{eq:lqr:K} is not possible. Since we obtain the system matrices $\matr{A}$ and $\matr{B}$ directly from the \gls{mpc} scheme and they have a clear purpose in making the \gls{mpc} and \gls{lqr} objectives compatible, we assume that they are fixed wrt. $\matr{Q}, \matr{R}, \matr{N}$ such that $\nabla_{\matr{Q}, \matr{R}, \matr{N}} \matr{A} = \nabla_{\matr{Q}, \matr{R}, \matr{N}} \matr{B} = 0$:

\begin{assumption}\label{assumption:dare}
The weighting matrices $ \{\matr{Q}, \matr{R}, \matr{N}\}$ values are such that the \gls{dare} has a solution $\matr{S}_\infty$.
\end{assumption}
\begin{proposition}
We flatten the matrices $\matr{S}, \matr{K}, \matr{Q}, \matr{R}$ and $\matr{N}$ into vectors, and organize them as follows: $y = \{\matr{S}, \matr{K}\}$ and $z = \{\matr{Q}, \matr{R}, \matr{N}\}$, such that the \gls{dare} and $\matr{K}$-matrix equations can be written on the vector form $F(y, z) = 0$. The gradient of $\matr{K}$ wrt. the weighting matrices $\matr{Q}, \matr{R}, \matr{N}$ can then be found as: 
\begin{align}
\frac{\partial y}{\partial z} = -\frac{\partial F}{\partial y}^{-1}  \frac{\partial F}{\partial z } = \nabla_{\matr{Q}, \matr{R}, \matr{N}} \matr{S}_\infty, \matr{K}_\infty \label{eq:ift}
\end{align}
\end{proposition}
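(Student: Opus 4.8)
The plan is to invoke the implicit function theorem applied to the system $F(y,z) = 0$ formed by stacking the (vectorized) DARE residual and the $\matr{K}$-matrix equation. First I would define the stacked map explicitly: writing $y = (\mathrm{vec}\,\matr{S}_\infty, \mathrm{vec}\,\matr{K}_\infty)$ and $z = (\mathrm{vec}\,\matr{Q}, \mathrm{vec}\,\matr{R}, \mathrm{vec}\,\matr{N})$, the first block of $F$ is the right-hand side of \eqref{eq:lqr:dare} minus $\matr{S}_\infty$ (evaluated at the stationary point, so $\matr{S}_{k+1} = \matr{S}_k = \matr{S}_\infty$), and the second block is the right-hand side of \eqref{eq:lqr:K} minus $\matr{K}_\infty$. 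By Assumption~\ref{assumption:dare} there exists a solution $\matr{S}_\infty$ (and hence $\matr{K}_\infty$ via \eqref{eq:lqr:K}), so $F$ vanishes at $(y,z)$ for the given weighting matrices; note also that with $\matr{A},\matr{B}$ held fixed (as assumed, $\nabla_{\matr{Q},\matr{R},\matr{N}}\matr{A} = \nabla_{\matr{Q},\matr{R},\matr{N}}\matr{B} = 0$) the map $F$ is a rational function of the entries of $y$ and $z$ on the open set where $\matr{R} + \matr{B}^\top \matr{S}_\infty \matr{B}$ is invertible, hence $C^1$ (indeed smooth) there.

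Next I would compute the two Jacobian blocks $\partial F/\partial y$ and $\partial F/\partial z$. The block $\partial F/\partial z$ is straightforward: $\matr{Q}$ enters the DARE residual linearly, $\matr{R}$ and $\matr{N}$ enter through the quadratic/inverse terms, and all three can be differentiated entrywise (using standard $\mathrm{vec}$/Kronecker identities for derivatives of $X^{-1}$ and of bilinear forms). The block $\partial F/\partial y$ collects the partial derivatives of the DARE residual and the $\matr{K}$-equation with respect to $\mathrm{vec}\,\matr{S}_\infty$ and $\mathrm{vec}\,\matr{K}_\infty$; the $\matr{K}$-equation block is essentially triangular (it depends on $\matr{K}_\infty$ only through the $-\matr{K}_\infty$ term, giving $-\matr{I}$), which is what makes the overall derivation tractable.

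The key step — and the main obstacle — is establishing that $\partial F/\partial y$ is invertible at the solution, since only then does the implicit function theorem apply and yield $\partial y/\partial z = -(\partial F/\partial y)^{-1}\,\partial F/\partial z$. Because of the block-triangular structure induced by the $\matr{K}$-equation, invertibility of $\partial F/\partial y$ reduces to invertibility of the Fréchet derivative of the discrete Lyapunov/Stein-type operator obtained by differentiating the DARE residual in $\matr{S}_\infty$, i.e.\ an operator of the form $X \mapsto X - \matr{A}_{cl}^\top X \matr{A}_{cl}$ (up to the usual manipulations) where $\matr{A}_{cl} = \matr{A} + \matr{B}\matr{K}_\infty$ is the closed-loop matrix. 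This operator is nonsingular precisely when $\matr{A}_{cl}$ has no pair of eigenvalues whose product is $1$; the stabilizing solution guaranteed by Assumption~\ref{assumption:dare} makes $\matr{A}_{cl}$ Schur-stable, so all eigenvalue products have modulus $< 1$ and the operator is invertible. I would state this as the crux of the argument (citing standard Riccati/Lyapunov theory, e.g.\ \cite{bertsekas1995dynamic}), then conclude by the implicit function theorem that $y$ is a $C^1$ function of $z$ near the solution with the stated gradient, and finally read off $\nabla_{\matr{Q},\matr{R},\matr{N}}\matr{S}_\infty$ and $\nabla_{\matr{Q},\matr{R},\matr{N}}\matr{K}_\infty$ as the corresponding sub-blocks of $\partial y/\partial z$.
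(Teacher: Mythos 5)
Your proposal is correct and follows essentially the same route as the paper: stack the \gls{dare} residual and the $\matr{K}$-equation into $F(y,z)=0$ with $\matr{A},\matr{B}$ held fixed and apply the \gls{ift}. You additionally supply the justification the paper leaves implicit — that $\partial F/\partial y$ is invertible at the stabilizing solution because the relevant block reduces to a nonsingular Stein operator in the Schur-stable closed-loop matrix — whereas the paper simply posits solvability via Assumption~\ref{assumption:dare} and defers the stabilizability/detectability discussion to the remark following the proof.
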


\begin{proof}
We rewrite \eqref{eq:lqr:dare}-\eqref{eq:lqr:K} on the general vector form $F(y, z) = 0$ where $y = \{\matr{S},\matr{K}\}$ and $z = \{\matr{Q},\matr{R},\matr{N}\}$.
\begin{align}
    \matr{A}^\top \matr{S} \matr{A} - \matr{S} - (\matr{A}^\top \matr{S} \matr{B} + \matr{N})\matr{K} + \matr{Q} &= 0 \\
    (\matr{B}^\top \matr{S} \matr{B} + \matr{R})\matr{K} - (\matr{B}^\top \matr{S} \matr{A} + \matr{N}^\top) &= 0
\end{align}
We then apply the \gls{ift} which states that:
\begin{align}
    \frac{\partial F}{\partial y}\frac{\partial y}{\partial z} +  \frac{\partial F}{\partial z } = 0 \enspace \Rightarrow \enspace \frac{\partial y}{\partial z} = -\frac{\partial F}{\partial y}^{-1}  \frac{\partial F}{\partial z}  \label{eq:lqr:infty_g}
\end{align}
These gradients are easily obtained with automatic differentiation software.
\end{proof}

Assumption \ref{assumption:dare} holds if $\matr{Q} - \matr{N} \matr{R}^{-1} \matr{N}^\top \succ 0$ and $\matr{R} > 0$, and further it is required that the symplectic pencil of the problem has eigenvalues sufficiently far from the unit circle, which is satisfied if the pair ($\matr{A}, \matr{B}$) is stabilizable and the pair ($\matr{A}$, $\matr{Q}$) is detectable \cite{laub1979schur}. These conditions can be ensured by estimating the gradients as described above, and then solving \pgls{sdp} using the estimated gradients subject to these constraints. However, for simplicity and to avoid constrained optimization, we set $\matr{N} = 0$, simplifying the constraints on the positive (semi)-definiteness to $\matr{Q} \succ 0$ and $\matr{R} > 0$. Further, we write $\matr{Q}$ and $\matr{R}$ in terms of their Cholesky decompositions: $\matr{Q} = \matr{Q}_C^\top \matr{Q}_C$, $\matr{R} = \matr{R}_C^\top \matr{R}_C$, and let the \gls{rl} algorithm adjust the elements of $\matr{Q}_C$ and $\matr{R}_C$. 

\subsubsection{Time-Varying Gradients}
In what follows we will assume $\matr{N} = 0$, for simplicity of the derivation and expressions (and the constraint reasons outlined above). We find the gradients with respect to $p$, where $p$ is an arbitrary scalar element of the $\matr{Q}$ and $\matr{R}$ matrices. The full gradients $\nabla_{\matr{Q}, \matr{R}} \matr{K}_k$ can then be found by solving \eqref{eq:lqr:K_tv_g} for each parameter of $\matr{Q}$ and $\matr{R}$ and arranging the results into the appropriate matrix structure. The derivations involve repeated application of the chain rule and the following relation for the gradient of the matrix inverse, where we define $\matr{E}$ for convenience:

\begin{align}
    \matr{E}_k &= \matr{R} + \matr{B}^\top_k \matr{S}_{k+1} \matr{B}_k, \enspace \nabla_p \matr{E}^{-1} = -\matr{E}^{-1} \nabla_p \matr{E} \matr{E}^{-1}
\end{align}

\begin{align}
    \begin{split}
        \nabla_{p} \matr{S}_{k} =& \nabla_{p} \matr{Q} + \matr{A}_k^\top (-\nabla_{p} \matr{S}_{k+1} \matr{B}_k \matr{E}_k^{-1} \matr{B}_k^\top \matr{S}_{k+1} \matr{A}_k \\
        &+\matr{S}_{k+1} \matr{B}_k(\matr{E}_k^{-1} \nabla_{p} \matr{E}_k \matr{E}_k^{-1} \matr{B}_k^\top \matr{S}_{k+1} \matr{A}_k\\ 
        &- \matr{E}_k^{-1} \matr{B}_k^\top \nabla_{p} \matr{S}_{k+1} \matr{A}_k) + \nabla_{p} \matr{S}_{k+1} \matr{A}_k)
    \end{split} \label{eq:lqr:S_tv_g}
     \\
     \begin{split}
         \nabla_{p} \matr{K}_k =& -\matr{E}_k^{-1} \nabla_p \matr{E}_k \matr{E}_k^{-1} \matr{B}_k^\top \matr{S}_{k+1} \matr{A}_k + \matr{E}_k^{-1} \matr{B}_k^\top \nabla_{p} \matr{S}_{k+1} \matr{A}_k \label{eq:lqr:K_tv_g}
     \end{split}
\end{align}

Note that these gradients could easily be extended with time-varying $\matr{Q}$ and $\matr{R}$ matrices. The time-varying gradients of $\matr{K}_k$ are given as follows:

\begin{align}
    \nabla_{\matr{Q}, \matr{R}} \matr{K}_k &= \begin{cases} \nabla_p \matr{K}_\infty \big\rvert_{p \in \matr{Q}, \matr{R}} &, \textrm{ if } k \geq N_i \\
                                           \nabla_p \matr{K}_k \big\rvert_{p \in \matr{Q}, \matr{R}}&, \textrm{ otherwise }
                                                \end{cases} \label{eq:lqr:K_tot_g} \\
    \nabla_{\theta^{\param{L}}} \pi_{\theta^{\param{L}}}^\param{L}(s_t) &= \nabla_{\matr{Q}, \matr{R}} \matr{K}_{t-i} (\hat{x}_t - \bar{x}_t)
\end{align}

\subsection{Summary of Control Algorithm with Learning}
The complete control algorithm is outlined in Algorithm \ref{alg:cs}. This shows the control algorithm in the exploration phase, but the exploitation phase is identical with the two exceptions: 1) there is no data collection (and therefore no parameter updates), and 2) we use the deterministic version (i.e. the mode of the probability distribution) of all policies except for the recomputation policy. The Bernoulli distribution of the recomputation policy does not generally tend to quasi-determinism as exploration settles, unlike the other policies. As such, we find that the deterministic version of this policy has worse control performance and less consistency in the plant response than the stochastic version which we are in fact optimizing the response for.

The system to be optimized runs in an episodic fashion, and every $Z$ steps the \gls{rl} algorithm updates the parameters of the control system. 

\begin{algorithm2e}[htbp]
    \SetAlgoLined
    \DontPrintSemicolon
    \caption{Control Algorithm with Learning}
    \While{Running}{
        Initialize episode: $x_0, \hat{p}_0$ \;
        Compute initial MPC solution: $u^{\param{M}}_{0:N_{\mathrm{max}} - 1}, \> \hat{x}_{1:N_{\mathrm{max}}} = \pi^{\param{M}}_{\theta^{\param{M}}}(x_0, \hat{p_0}, N_{\mathrm{max}})$ \;
        Execute first MPC input: $u_0^\param{M}$ \;
        Calculate LQR system matrices: $\matr{A}_{1:N_{\mathrm{max}}}, \matr{B}_{1: N_{\mathrm{max}}}$ \;
        \For{$t = 1, 2, \dots, T$}{
            Measure system state: $\bar{x}_{t}$ \;
            \eIf{$\pi^\param{c}_{\theta^\param{c}}(c_t \thinspace | \thinspace s_t)$ \textup{ draws } $\tilde{c}_t = 1$}{
                Compute MPC solution: $\tilde{u}^{\param{MN}}_{t: t + N_t - 1}, \> \hat{x}_{t+1: t+N_t} = \pi^\param{MN}_{\theta^{\param{MN}}}(\tilde{u}^{\param{MN}} | s_t)$ \;
                Execute control input: $\tilde{u}^{\param{MN}}_t$ \;
                                Calculate LQR system matrices: $\matr{A}_{t: t+N_t}, \matr{B}_{t:t+N_t}$ \;
                Update last computation states: $i \leftarrow t$ \;
            }{
                Compute input: $\tilde{u}^{\mathrm{CS}}_t = \pi^{\mathrm{CS}}_{\theta^{\param{M,L,N}}}(u^{\mathrm{CS}}_t | s_t)$ \;
                Execute input: $\tilde{u}^{\mathrm{CS}}_t$ \;
            }
            Collect data: $\mathcal{D} \leftarrow (s_t, \tilde{a}_t, R(s_t, \tilde{a}_t), s_{t+1})$ \;
            \If{$\mathrm{size}(\mathcal{D}) = Z$}{
                \For{$e = 1, \dots, \mathrm{Num Epochs}$}{
                    \For{$\mathcal{B} \in \mathcal{D}$}{
                        Evaluate \gls{ppo} objective over minibatch $\mathcal{B}$ \;
                        Update parameters \eqref{eq:rl:gd} \;
                    }
                }
                Empty dataset: $\mathcal{D} = \{\}$ \;
            }
        }
    }
    \label{alg:cs}
\end{algorithm2e}

\subsection{Reward Function}\label{sec:method:rewf}

The \gls{rl} reward function codifies the behaviour that the control algorithm is designed to exhibit, wrt. stability, control performance, and computational complexity. As \gls{rl} is a trial-and-error based optimization approach, the control algorithm will necessarily have to attempt risky maneuvers and experience the consequences in order to learn how to control the system. However, with a feasible composition of the learning problem in terms of hyperparameters and expressive power of the learned components, the end result of the converged behaviour should be stable and yield good control performance. A failure to achieve this would therefore indicate a misalignment between choice of reward function, and the control design requirements. With this in mind, we formulate a reward function on the following form:

\begin{align}
\begin{split}
     R(s_t, a_t) =& R_\ell(s_{t+1}) + \lambda_h(T - t) R_h(s_{t+1}) \> + \\
     &\lambda_c R_c(a_t) R_N(a_t) \label{eq:method:rewf}
\end{split}
\end{align}

Here, $\lambda_h$ and $\lambda_c$ are weighting factors that represents the relative importance of the different terms. $R_\ell$ is the control performance term which incentivizes the \gls{rl} policy to achieve good control performance. We set it to be the same as the stage cost from the \gls{mpc} objective, i.e. $R_\ell = -\ell$ but this is not a strict requirement. $R_h$ is a term that indicates whether any system constraints are violated. If using hard constraints as is the case in this paper, this term is binary and is further weighted by $T-t$, that is, the number of time steps remaining in the episode, since the episode is prematurely terminated if any constraints are violated. If the system to be controlled has soft constraints, the $R_h$ term could be made continuous. The $R_c$ term indicates whether the \gls{mpc} was computed at the current step ($R_c = 1$), and as such it should reflect the relative computational complexity of the of the two modes of the control system. Finally, $R_N$ represents the computational complexity of the \gls{mpc} as a function of the prediction horizon $N$. We assume that the computational complexity grows linearly in the prediction horizon, i.e. $R_N(N_t) = N_t$, as a lower bound for the true complexity. We favor a lower bound as this would bias the \gls{rl} policy towards better control performance rather than lower computation. The relationship between the prediction horizon and the computational complexity depends upon the algorithms used in the \gls{mpc} implementation. We employ an interior point method, which under the assumption of local convergence and a guess of an initial solution that is reasonable, generally yields linear complexity \citep{rao1998application}, while other methods such as active set methods typically yield quadratic growth in computational complexity \citep{lau_comparison_2015}. 

\subsection{Initialization of the Learning Procedure}
How to initialize the control algorithm, that is, defining its behaviour before any learning takes place is a question that has several valid answers. One could favor the most computationally expensive initialization, i.e. compute \gls{mpc} every step with the maximum horizon, which without any prior knowledge about the task would be the ``safest" initialization that is most likely to give the best control performance. This is however not necessarily the initialization that would facilitate the learning process most optimally, and might trap the learned components in local minima. To simplify the learning problem one might therefore instead favor initializing the components in the center of their operating range and with high entropy (wrt. to its output) such that exploration is maximal. 

We offered some guidance in the case of the \gls{lqr} in Section \ref{sec:theory:lqrcompat}, i.e. the \gls{lqr} should be initialized such that it is compatible with the \gls{mpc}. We view the purpose of the recomputation policy as finding instances where comparable (or better) control performance can be achieved without computing the \gls{mpc}, and as such, we choose to initialize it to emulate the \gls{mpc} paradigm and with high probability elect to compute the \gls{mpc}. For the horizon policy, we initialize it equal to the best performing fixed horizon \gls{mpc} scheme. Initializing the policies can be done by adjusting the bias terms of the policy outputs:

\begin{align}
    \pi_{\theta^\param{c}}^\param{c} &\leftarrow - \log\left(\frac{1}{c_{\mathrm{init}}} - 1\right) \label{eq:method:comp_init} \\
    \pi^\param{N}_{\theta^\param{N}} &\leftarrow \tanh^{-1}\left(\frac{(N_{\mathrm{max}} - N_{\mathrm{min}})(N_{\mathrm{init}} - (-1))}{1 - (-1)} + N_{\min}\right) \label{eq:method:horizon_init}
\end{align}

\section{Numerical Results}\label{sec:experiment}
This section illustrates the proposed control method as outlined in Section \ref{sec:method} on the simulated inverted pendulum system. Additionally, to aid in highlighting the contributions of the different meta-parameters of the control algorithm we also present experiments for each meta-parameter in isolation.

We set $N_\mathrm{min} = 1$ and $N_\mathrm{max}$ = 40, and note that this horizon does not cover a full swing-up maneuver of the pendulum (requiring $N>100$). The maximum horizon is chosen to emulate the effects of a computationally limited embedded hardware platform. The weighting terms of the reward function \eqref{eq:method:rewf} are set to $\lambda_h = -10$ and $\lambda_c = 10^{-2}$. This ensures that violating constraints incur a higher cost than finishing the episode, and makes the computation account for approximately 8\% of the total cost of the standard \gls{mpc} scheme (i.e. \gls{mpc} computed at every step) with the highest prediction horizon. 

While our method supports tuning parameters internal to the \gls{mpc} (i.e. objective, constraints etc.), doing so requires reevaluating the log-probabilities and recomputing the \gls{ocp} for every data sample after every parameter update (when using minibatches or several passes over a dataset). This adds considerable computation, and optimization of these parameters with \gls{rl} has been successfully demonstrated in previous works \cite{edwards2021automatic,gros_reinforcement_2021}. Therefore, since this paper focuses on optimizing the meta-parameters of the \gls{mpc} scheme (that is, the prediction horizon and when to compute) we do not tune any parameters internal to the \gls{mpc} in this example. We do however optimize the parameters of the \gls{lqr} to illustrate the concept, as the \gls{lqr} is less costly to evaluate.  Finally, we omit learning the value function of the \gls{mpc}, as it adds considerable complexity to the experiments, and was found to not add much benefit for control of similar systems in \cite{bohn2021reinforcement}.

\subsection{The Inverted Pendulum System} \label{sec:num:invp}
The inverted pendulum system is a classic control task in which a pendulum, consisting of a rigid rod with a mass at the end, is mounted on top of a cart that is fixed on a track. The control system exerts a horizontal force on the cart, which moves the cart back and forth on the track, which subsequently swings the pendulum. The control objective is to stabilize the pendulum in the up-position and positioning the cart at the position reference, while respecting the constraint that the cart's position is limited by the physical size of the track. The dynamics of the system are highly nonlinear, and further the system is unstable, meaning that a controller is necessary to guide the system to stable conditions and then to maintain the stability. We perturb the parameters of the \gls{mpc} dynamics model such that its dynamics differs from the plant dynamics, as shown in Table \ref{tab:invp:param}, and thus the \gls{lqr} is a useful addition to correct for prediction errors in between the \gls{mpc} computations. The state $x$ consists of the cart position $\psi$ and velocity $v$, pendulum angle $\phi$ and angular velocity $\omega$.


Each episode lasts a maximum of 150 steps (i.e. T=150), or until the position constraint \eqref{eq:invp:pos_h} is violated. We sample initial conditions according to \eqref{eq:invp:x0}, and a position reference \eqref{eq:invp:pos_r} that is redrawn every 50 steps, where $\mathcal{U}$ is the uniform distribution. The \gls{mpc} objective is defined as \eqref{eq:invp:ell}, i.e. minimize the kinetic energy of the system and the distance of the cart to the position reference, while maximizing the potential energy. 

\begin{align}
    \dot{v} &= \frac{m g \sin{\phi} \cos{\phi} - \frac{7}{3}\left(u + m l \omega^2 \sin{\phi} - \mu_c v\right) - \frac{\mu_p \omega \cos{\phi}}{l}}{m \cos^2{\phi} - \frac{7}{3}M} \\
    \dot{\omega} &= \frac{3}{7 l}\left(g \sin{\phi} - \dot{v} \cos{\phi} - \frac{u_p \omega}{m l} \right), \dot{\phi} = \omega, \enspace \dot{\psi} = v \\
    &x = \left[\psi, v, \phi, \omega\right]^\top, \enspace \psi_r(t) \sim \mathcal{U}(-1, 1) \label{eq:invp:pos_r} \\
    &x_0 \sim \left[0, \mathcal{U}(-1, 1), \mathcal{U}(-\pi, \pi), \mathcal{U}(-1, 1)\right]^\top, \matr{D} = \left[0.1\right]  \label{eq:invp:x0} \\
     &x^s_t = \left[\psi_r(t), 0, 0, 0\right]^\top, \enspace -5 \leq u_t \leq 5, \enspace -2 \leq \psi_t \leq 2 \label{eq:invp:pos_h} \\
     &\ell(x_t, u_t, \hat{p}_t) = \mathrm{E}_{\mathrm{k}} - 10 \mathrm{E}_{\mathrm{p}} + 10 (\psi_t - \psi_r(t))^2 \label{eq:invp:ell}
\end{align}

\begin{table} [htbp]
    \caption{Parameters of the inverted pendulum system.}
    \ra{1.2}
	\centering
	\begin{tabular}{@{}lrrl@{}} \toprule
		Name  & Plant & MPC & Description \\ \midrule
		$m$	&  0.1 & 0.2 & Mass of pendulum \\ 
		$M$ &  1.1 & 1.5 & Mass of cart and pendulum   \\
		$g$ & 9.81 & 9.81 & Gravitational constant \\	
		$l$ & 0.25 & 0.25 & Half the length of the pendulum \\
		$\mu_c$ & 0.01 & 0.01 & Friction coefficient between track and cart \\
		$\mu_p$ & 0.001 & 0.001 & Friction coefficient between pendulum and cart \\
		$\Delta_t$ & 0.04 & 0.04 & Discretization step size in seconds \\
		\bottomrule
	\end{tabular}
	\label{tab:invp:param} 
\end{table}

\subsection{Training and Evaluation}
The hyperparameters of the \gls{ppo} algorithm are listed in Appendix \ref{sec:appendixA}. The horizon and recomputation policies are fully-connected \glspl{nn} with 2 hidden layers with 64 nodes in each, whereas the value function is an \gls{nn} with 2 hidden layers of 128 nodes, all using the $\tanh$ activation function. We use the same hyperparameters for the isolated experiments, rather than tuning them specifically, and as such the isolation experiments serve mainly to assess how the two meta-parameters contribute. For the horizon experiment we use a frame-skip of $d=10$, and for the other experiments we use a varying $d \in [1, 2, 3, 4]$ that is drawn at the start of every episode (with $d = 1$ when evaluating).

We initialize the recomputation policy using \eqref{eq:method:comp_init} to compute the \gls{mpc} with 90\% probability, thus within two time steps there is a $99\%$ probability that the \gls{mpc} is computed. It therefore initially mimics the traditional \gls{mpc} scheme closely. The horizon policy is initialized at the best performing fixed-horizon (Fig. \ref{fig:exp:mpc_surface}), i.e. $N=31$, using equation \eqref{eq:method:horizon_init}.

To evaluate the performance of the control system governed by the \gls{rl} policy, we construct a "test-set" consisting of 25 episodes where all stochastic elements are drawn in advance (i.e. initial conditions and position references) such that the episodes are consistent across evaluations. This gives us an objective way to compare and order policies on their performance, and to compare against the \gls{mpc} baselines. We set the cost objective \eqref{eq:prob_definition} $C = -R$, and evaluate models based on the total sum of costs over the episodes in the test-set.

Moreover, for every experiment we report the average over five random initial seeds (referred to as models), which impact the initializations of \glspl{nn} and the randomness in exploration and episodes. Fig. \ref{fig:exp:mpc_surface} shows the total cost of the control algorithm baselines as a function of a static prediction horizon and recomputation schedule. The minimum cost is achieved with a prediction horizon of $N=31$ and a schedule of recompute at every step. It is important to note that while this figure indicates that the optimization landscape as a function of these two variables is monotonic and amenable to optimization, the reward landscape as a function of the parameters $\theta$ (which in this example consists of the parameters of the \gls{lqr} and the parameters of the recomputation and horizon \glspl{nn}) is likely very different --- containing many valleys and hills to overcome in order to minimize the cost objective.

\begin{figure}
    \centering
    \includegraphics[width=0.5\textwidth]{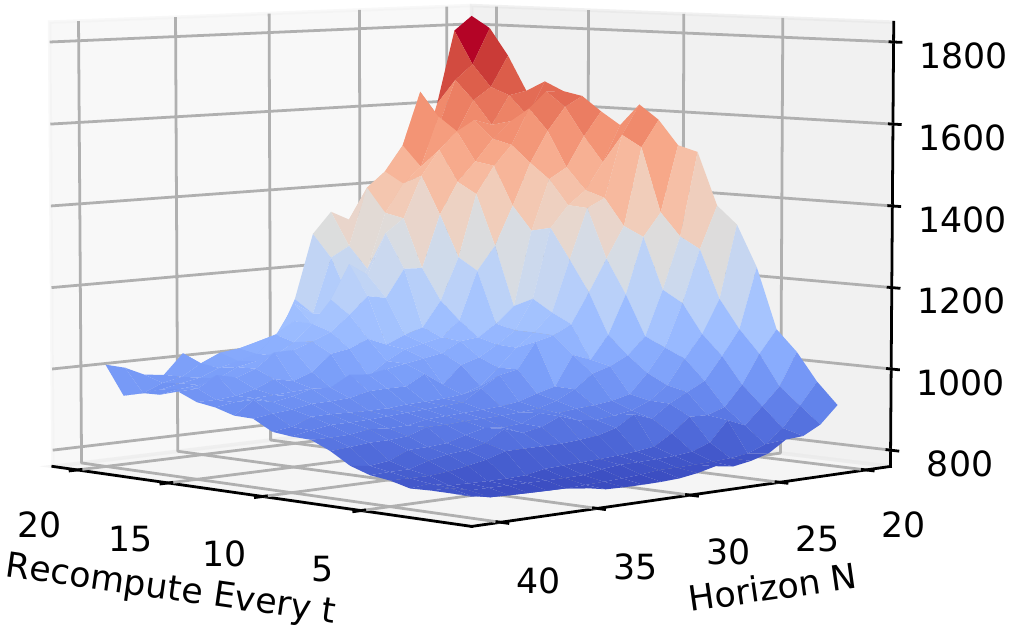}
    \caption{Total cost over the evaluation set for the MPC as a function of fixed horizons and fixed recomputation schedules. The minimum is found at horizon $N=31$ and recompute every step with a cost of $781$.}
    \label{fig:exp:mpc_surface}
\end{figure}

\subsection{Inverted Pendulum Results and Discussion}
The results of the experiments are presented in Figures \ref{fig:exp:training} and \ref{fig:exp:distributions}, and Table \ref{tab:exp:ablation}. 
When reporting the cost of the tuned system, we average over the best performing policy of each model, as well as over five seeds to account for the stochastic recomputation policy. When describing learned behaviour as in Fig. \ref{fig:exp:distributions} however, we use the best performing policy and one specific seed. We summarize our main findings as follows:

\textbf{Learning improves both computational complexity and control performance}. The \gls{rl} framework we propose is able to improve upon the total objective by $21.5\%$, which interestingly is not only due to reductions in the computation term (70.7\%) but also a sizable improvement on the control performance objective ($18.4\%$). For this example it takes about 20 hours of data of interaction with the system to reach convergence, corresponding to 700 thousand data samples. Fig. \ref{fig:exp:training:full} shows that generally the performance  monotonically improves with more data, we therefore do not view the data requirement as a major issue. Note that we did not spend significant time tuning the hyperparameters of the \gls{rl} algorithm, and we favored consistency over faster rate of improvement. Thus, the algorithm's data requirement could potentially be improved with hyperparameter optimization. Fig. \ref{fig:exp:training} shows that the recomputation meta-parameter is the most impactful, reaching a cost of around 700 when optimized by itself, which is about 13\% higher than the converged value of the optimized complete policy. It also shows why we favor biasing towards higher computation, as while initializing the recomputation policy to compute the \gls{mpc} with 10\% probability reaches the same asymptotic performance as the 90\% initialization, it is the only policy we trained that intermittently violates the constraint.

Because we initialize the learning problem at a best-effort of optimal tuning, the control performance improvements we observe are non-trivial to explain and arise due to complex interactions between multi-step adaptive-horizon \gls{mpc} and the \gls{lqr} control law. Fig. \ref{fig:exp:distributions} shows the distribution of the prediction horizon, and the steps between \gls{mpc} computations chosen by the policies. The horizon policy has converged to only selecting the maximum horizon, while the \gls{mpc} is mostly computed at every step ($70\%$ of steps) with some significantly longer streaks. Because of the finite-horizon nature of the \gls{ocp} the \gls{mpc} will produce solutions of varying optimality based on the exact initial conditions it is computed from. The \gls{rl} policy has learned to recognize a set of conditions for which the computed solutions are more optimal than neighboring conditions, and therefore not recomputing and employing a longer section of the more optimal input sequence will produce better control performance.


The control algorithm we propose adds some overhead compared to \gls{mpc}, i.e. evaluating the policies deciding if the \gls{mpc} should be computed and with what horizon, and computing the \gls{lqr} gain-matrix as necessary. The overhead is however small compared to the execution time of the \gls{mpc}, and therefore our framework results in a 36\% reduction in the total processing time of the control system compared to the best performing \gls{mpc} scheme. This frees up resources for other on-board tasks the controlled system might have, or could be leveraged to increase the battery life of the system. 

One of the models we trained for the complete policy, and one of the models for the isolated recomputation policy got stuck in the local minima of computing the \gls{mpc} at every step, and since this didn't give any interesting results (essentially maintaining the initial behaviour and performance) we excluded them from Fig. \ref{fig:exp:training} and the discussions, replacing them with new models with new seeds. Since policy gradient algorithms are local search methods it is to be expected that it finds local minima, and random exploration can cause it to sometimes settle in sub-optimal local minima. This is also a question of how much data is used to generate the gradient.

\begin{figure}[htbp]
    \subfloat{\includegraphics[width=0.25\textwidth]{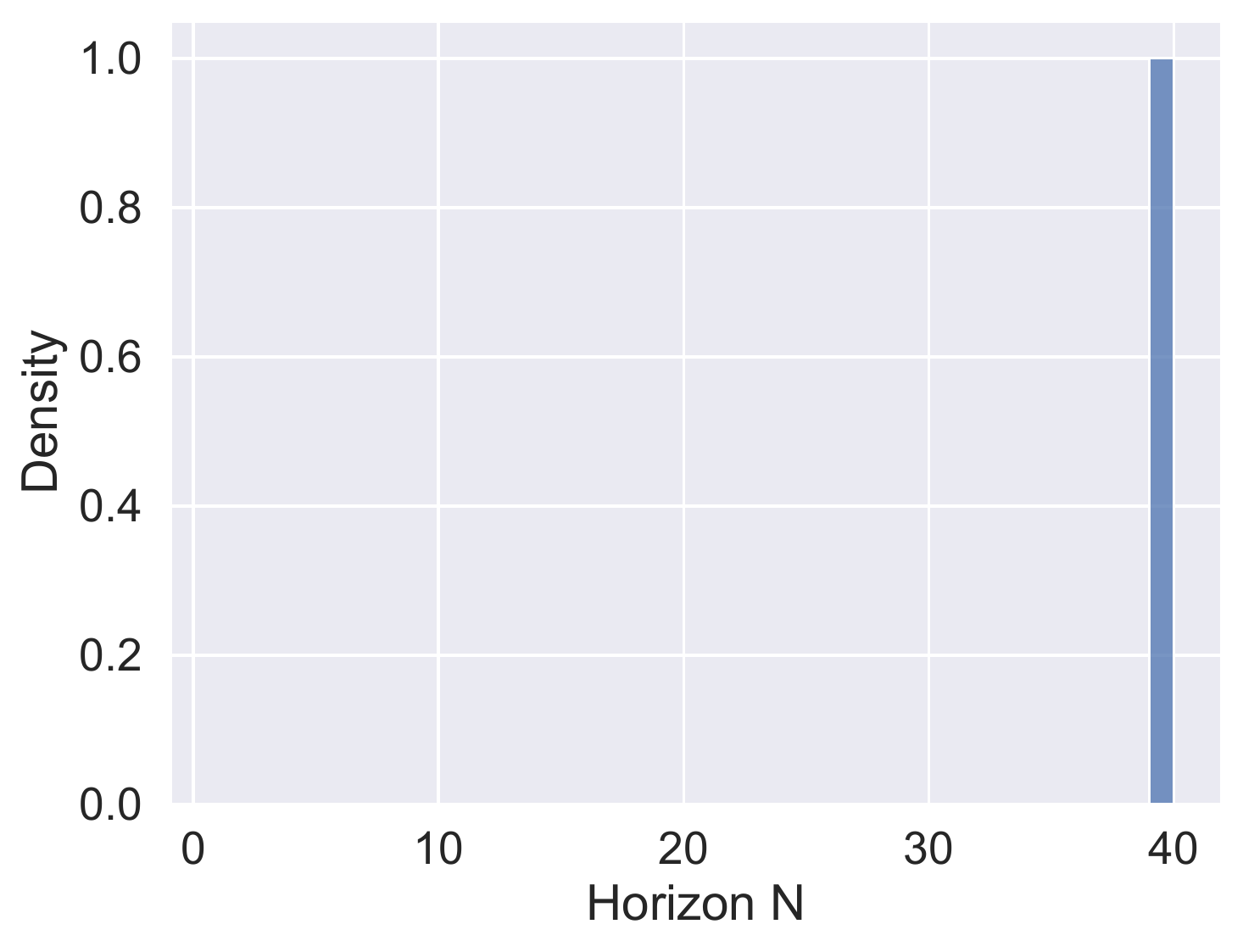}}~
    \subfloat{\includegraphics[width=0.25\textwidth]{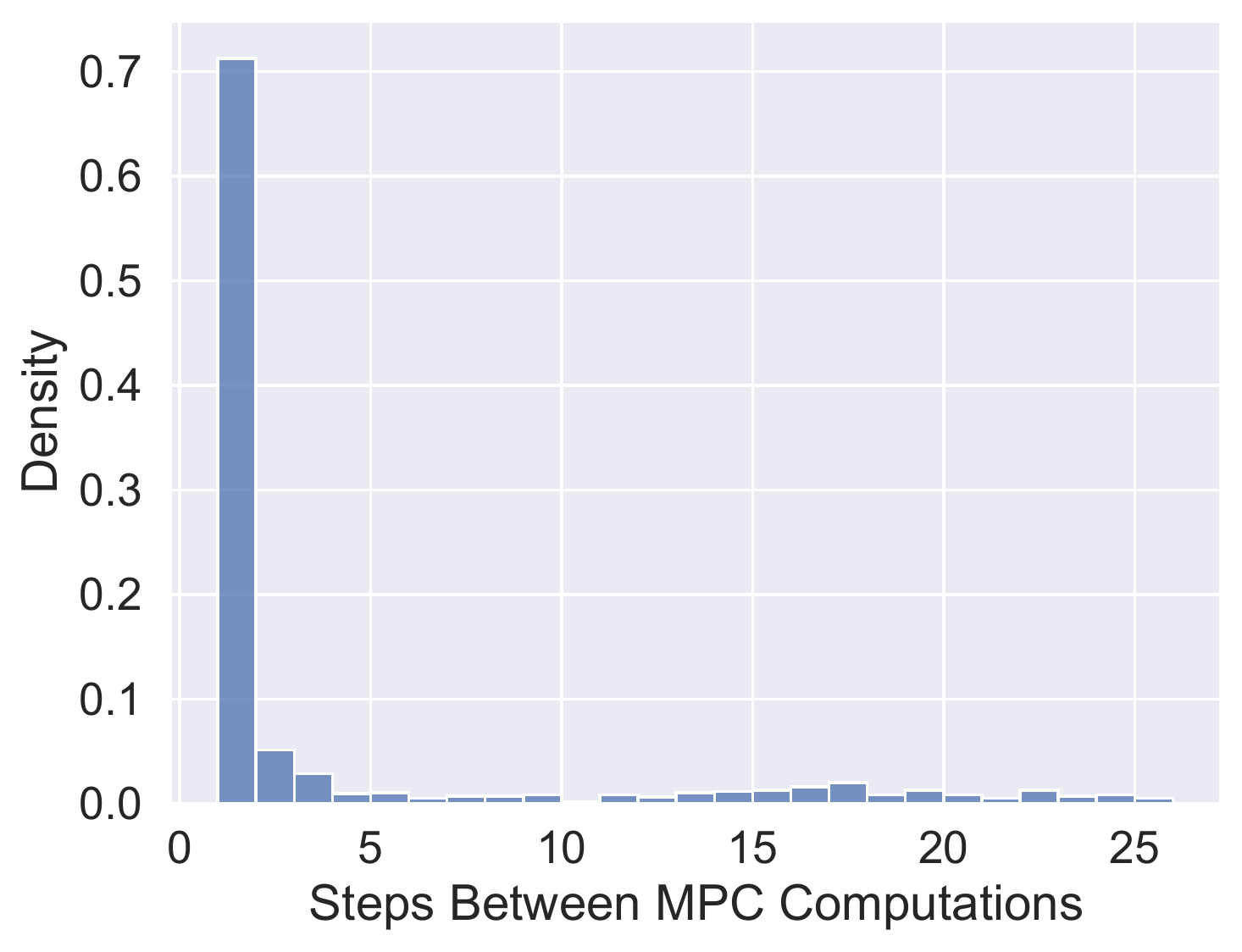}}
    \caption{Distribution of prediction horizons and steps between \gls{mpc} computations selected by the best performing policy on the evaluation test-set.}
    \label{fig:exp:distributions}
\end{figure}

\textbf{Joint optimization delivers additional improvements}. These results support the conjecture stated in the introduction; by optimizing the different parameters of the control algorithm together, we are able to enhance the control algorithm in ways that we are not able to when optimizing the same parameters in isolation. The horizon policy tends to a mean horizon of 28-31 when optimized by itself (Fig. \ref{fig:exp:training:pi_N}), which is consistent with the best performing fixed horizon \gls{mpc} as seen in Fig. \ref{fig:exp:mpc_surface}. However, as Fig. \ref{fig:exp:distributions} shows the complete \gls{rl} policy favored higher horizons when optimizing all meta-parameters together. As shown in Table \ref{tab:exp:ablation} this improves performance significantly, where imposing a maximum horizon of 31 results in a $13.4\%$ increase in cost. When solely tuning the \gls{lqr} applied on an \gls{mpc} computed on a fixed recomputation schedule, we were not able to achieve any consistent improvements. We hypothesize that this is due to the model mismatch (the \gls{mpc} is overestimating the weight of the system by $36\%$) such that the computed \gls{lqr} is not very well suited for the actual control problem, and therefore its gradients are flat and noisy. When combining \gls{lqr} tuning with meta-parameter optimization we are able to tune the \gls{lqr} to achieve meaningful improvements, with the tuned \gls{lqr} improving by $1.95\%$ over the initialization described in Section \ref{sec:theory:lqrcompat}. This might be due the recomputation policy generating trajectories for the \gls{lqr} that are similar, thus yielding more consistent gradients. The last entry in Table \ref{tab:exp:ablation} shows a scenario where the same amount of computation is expended uniformly in time rather than dynamically allocated by the recomputation policy, resulting in a $35.45\%$ increase in cost.



\begin{figure*}
    \subfloat[Optimizing both meta-parameters jointly. The MPC baseline represents the \gls{mpc} scheme tuned as a function of fixed recomputation schedule and horizon.]{\includegraphics[width=\textwidth]{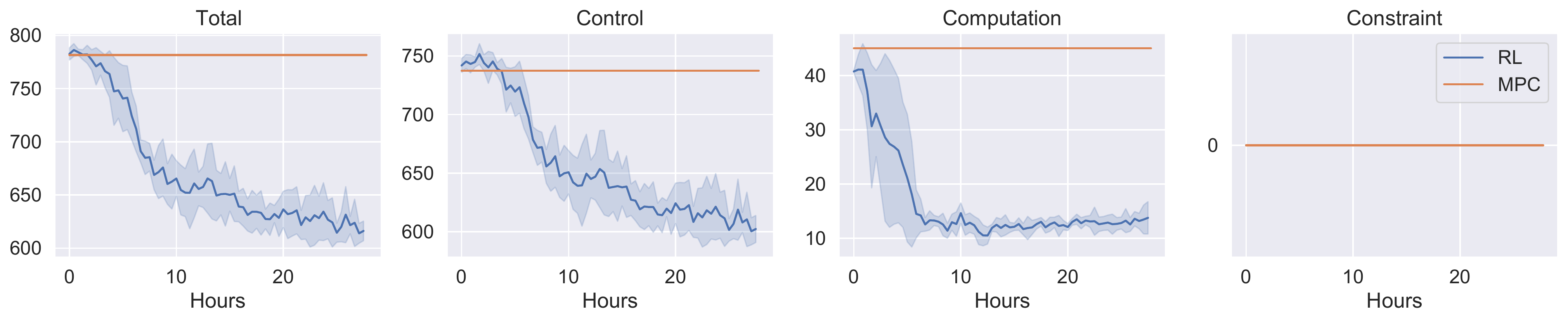}\label{fig:exp:training:full}}
    
    
    \subfloat[Optimizing the recomputation meta-parameter in isolation.]{\includegraphics[width=\textwidth]{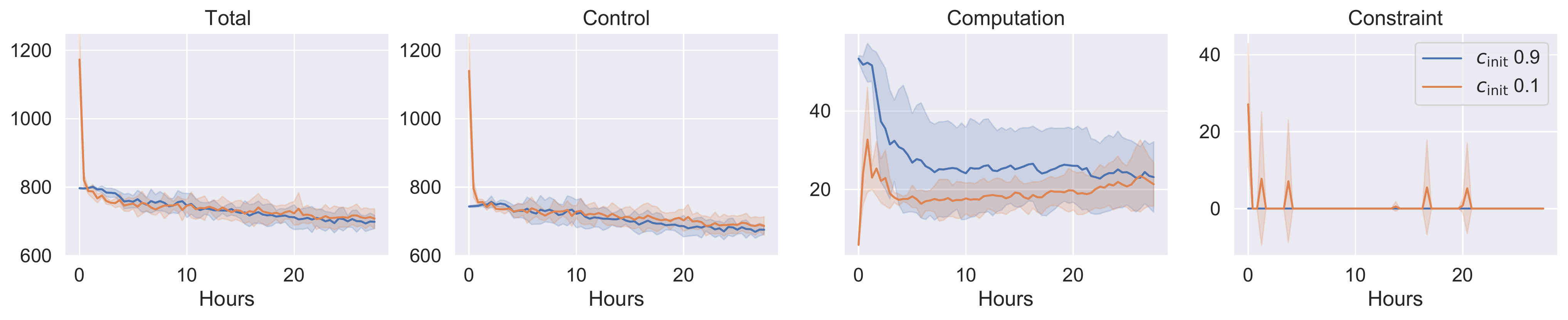} \label{fig:exp:training:pi_c}}
    
    \subfloat[Optimizing the prediction horizon meta-parameter in isolation.]{\includegraphics[width=\textwidth]{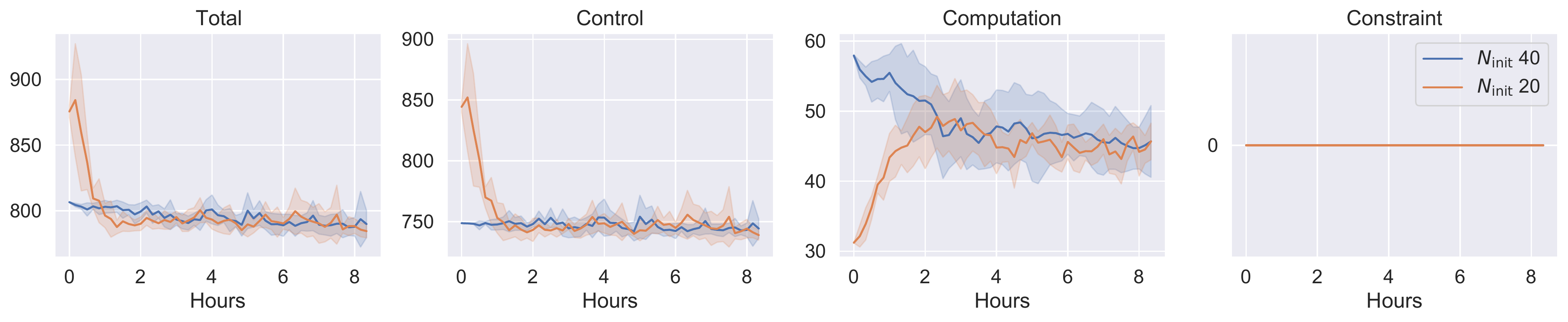} \label{fig:exp:training:pi_N}}
    \caption{Training process for the proposed learned control algorithm, and for each meta-parameter in isolation. In the isolated cases, we show that the tuning process is capable of recovering from sub-optimal initializations.}
    \label{fig:exp:training}
\end{figure*}

\begin{table} [htbp]
    \caption{Ablation analysis: We take the best performing policy and alter one aspect at a time, observing its effect on the cost.}
    \ra{1.2}
	\centering
	\begin{tabular}{@{}lr@{}} \toprule
		Scenario  & Change \\ \midrule
		Default \gls{lqr} tuning & $+1.95\%$ \\
		Max horizon 31 & $+13.41\%$\\
		Recompute at the average frequency (every $t=4$) & $+35.45\%$\\
		\bottomrule
	\end{tabular}
	\label{tab:exp:ablation} 
\end{table}

\section{Conclusion}
This paper has introduced a novel control algorithm that is tuned with \gls{rl} to jointly improve the computational power usage and the control performance. We focus on optimizing the meta-parameters of the \gls{mpc} scheme, and demonstrate its efficacy on the classic control task of balancing an inverted pendulum. We show that by selecting the conditions under which the \gls{mpc} is computed, control performance can be improved over the paradigm of computing at every step, and that the control algorithm can be further improved by considering all the optimized parameters together. Seeing as \gls{mpc} is increasingly being considered for applications with fast dynamics or limited computational power and energy resources, our framework could be an important tool in enabling such applications to harness the good control performance and constraint satisfaction abilities of the \gls{mpc}. We found that with model mismatch, tuning the dual mode \gls{mpc} and \gls{lqr} control law was difficult. Future work could evaluate if state-dependent $\matr{Q}$ and $\matr{R}$ matrices alleviates this issue, which would also provide a gain-scheduling property to the control algorithm. Whether any stability guarantees and control satisfaction properties can be given under the learned meta-parameter-deciding policies should be investigated.  

\bibliographystyle{ieeetr}
\bibliography{main}

\newpage
\begin{appendices}
\section{PPO Hyperparameters}\label{sec:appendixA}
\begin{table} [hbt!]
    \caption{Hyperparameters of the PPO algorithm}
    \ra{1.2}
	\centering
	\begin{tabular}{@{}lrl@{}} \toprule
		Name  & Value & Description \\ \midrule
		$\gamma$ & $0.99$ & Discount factor \\
		Z & $256$ & Length of trajectory collected by each actor \\
		n\_envs & $4$ & Number of actors run in parallel \\
		ent\_coef & $0$ & Coefficient for entropy loss term \\
		$\eta$ & $3 \cdot 10^{-4}$ & Learning rate \\
		vf\_coef & $0.5$ & Coefficient for value function loss term \\
		max\_grad\_norm & $0.5$ & Maximum global norm of gradients \\
		$\lambda$ & 0.9 & Bias-variance tradeoff for GAE \\
		nminibatches & 1 & Number of partitions for dataset \\
		noptepochs & 10 & Number of passes over the dataset \\
		$\epsilon$ & 0.25 & Clip parameter for the objective function \\
		\bottomrule
	\end{tabular}
	\label{tab:appA:hyperparams} 
\end{table}
\end{appendices}

\end{document}